\newcommand{\inp}{\textit{inp}}
\newcommand{\BigO}{\mathop{\text{O}}}
\newcommand{\om}{\mathop{\mathrm{\omega}}}
\newcommand{\Th}{\mathop{\mathrm{\Theta}}}
\newcommand{\defeq}{\stackrel{\text{def}}{=}}
\newcommand{\minus}{\mathop{\tikz{
\coordinate (P1) at (0,0.1);
\coordinate (P2) at (0.2,0.1);
\coordinate (P3) at (0.1,0.18);
\coordinate (P4) at (0,0);
\coordinate (P5) at (0.2,0.2);
\draw[fill=white, color=white] (P4) rectangle (P5);
\draw[-] (P1) -- (P2);
\fill (P3) circle(1pt);
}}}
\newtheorem{thm}{Theorem}
\newtheorem{cor}{Corollary}[thm]
\theoremstyle{definition}
\newtheorem{defi}{Definition}
\begin{document}

\begin{frontmatter}
\title{Arbitrary Sequence RAMs}
\author[monash]{Michael Brand}
\ead{michael.brand@alumni.weizmann.ac.il}
\address[monash]{Faculty of IT, Monash University, Clayton, VIC 3800, Australia}

\begin{abstract}
It is known that in some cases a Random Access Machine (RAM) benefits from
having an additional input that is an arbitrary number, satisfying only the
criterion of being sufficiently large. This is known as the ARAM model.
We introduce a new type of RAM, which we refer to as
the Arbitrary Sequence RAM (ASRAM), that generalises the ARAM by allowing
the generation of additional arbitrary large numbers at will during
execution time. We characterise the power contribution of this ability under
several RAM variants.

In particular, we demonstrate that an arithmetic ASRAM is more powerful than
an arithmetic ARAM, that a sufficiently equipped ASRAM can recognise any
language in the arithmetic hierarchy in constant time (and more, if it is
given more time), and that, on the other hand, in some cases the ASRAM is no
more powerful than its underlying RAM.
\end{abstract}

\begin{keyword}
Arbitrary number \sep Random Access Machine \sep arithmetic complexity
\end{keyword}

\end{frontmatter}

\section{Introduction}
The Random Access Machine, or RAM,
(see \cite{Aho:Algorithms} for a formal definition)
is a computational model that affords all that we expect from a modern computer
in terms of flow control (loops, conditional jump instructions, etc.) and
access to variables (direct and indirect addressing).
It is denoted by $\text{RAM}[\textit{op\/}]$, where $\textit{op}$ is the set
of operations that are assumed to be executable by the RAM
in a single unit of time each. A comparator for equality is also
assumed to be available, and this also executes in a single unit of
time. The variables (or \emph{registers}) of an integer RAM contain nonnegative
integers and are also indexable by addresses that are nonnegative integers.

To discuss the power of RAMs, let us consider RAMs as calculating functions.
We initialise the RAM by storing the input value, $\inp$, in the RAM's
$R[0]$ register (setting all other registers to zero), and the output of the
function is taken to be the value of $R[0]$ at termination time.
This definition can be extended to functions receiving any fixed
number of inputs.
Alternatively, RAMs can be discussed as language acceptors, where
$\inp$ is taken to be in the language if and only if the return value
is non-zero. Traditionally, when viewing the RAM as an acceptor, non-termination
is taken to mean rejection of the input. By contrast, when viewing the RAM as a
function calculator, non-termination is usually taken to mean that the RAM
calculates a partial function, rather than a function. The two frameworks can be
unified by arbitrarily taking a non-terminating computation in a
function-calculating RAM to yield an output of zero.\footnote{All RAMs
discussed in this paper are guaranteed to terminate in finite time, so handling
the case of non-termination is never an issue.}

Ben-Amram and Galil \cite{Galil:Shift} write
``The RAM is intended to model what
we are used to in conventional programming, idealized in order to be better
accessible for theoretical study.'' 
However, in practice, the RAM's ability to manipulate very large numbers in
constant time has been shown to reduce algorithmic complexities beyond what
is usually considered ``reasonable''. 
For example, it was shown regarding many RAMs
working with fairly limited instruction sets that they are able to
recognise any PSPACE problem in deterministic polynomial time
\cite{Schonhage:rams, Simon:Multiplication, Bertoni:pTime_RAM, Pratt:VMs}, and
a unit-cost RAM equipped only with arithmetic operations, Boolean operations and
bit shifts can, in fact, recognise in constant time any language that is
recognised by a TM in time and/or space constrained by \emph{any} elementary
function of the input size \cite{Brand:ALNs}.

In most cases (e.g., \cite{Mansour:GCD, Mansour:sqrt, Mansour:floor1,
Mansour:floor2, Shamir:factoring, Paul:sorting}), the large integers to be
efficiently manipulated by the RAM are generated to precise values that are
conducive to the computation at hand. However, in other cases
(e.g., \cite{Bshouty:exp, Ziegler:Nonarithmetic, Bertoni:pTime_RAM}) some of
the integers manipulated are arbitrary, subject only to the condition of
being sufficiently large. We refer to such arbitrary large numbers as ALNs.

Recently, in \cite{Brand:ALNs}, a general framework was proposed for the study
of the power contribution of ALNs, this being the ARAM. An
$\text{ARAM}[\textit{op\/}]$ program is defined by a
$\text{RAM}[\textit{op\/}]$ program, $r$, in the following way. The ARAM is
said to compute the function $\rho_A(\inp)$ if $r$ computes the
function $\rho(\inp,A)$, and for all $\inp$
\[
\mathop\forall\limits^\infty\!\!A \,\rho(\inp,A)=\rho_A(\inp),
\]
where $\mathop\forall\limits^\infty$ denotes ``all but a finite number''
(usually written as ``almost all'').
For the purpose of complexity calculations, the run-time associated with the
ARAM is the worst-case run-time of $r$ with the same $\inp$ over any possible
choice of $A$ (including the possibilities for which the results of $\rho$ and
of $\rho_A$ differ). If the run-time for a given $\inp$ is unbounded, over
the possible choices of $A$, the ARAM is said not to terminate.

We now introduce a new computational model which generalises the ARAM, this
being the Arbitrary Sequence RAM (ASRAM). To define the ASRAM, let us first
define the Arbitrary Large Sequence (ALS) set.

\begin{defi}[ALS]
An \emph{Arbitrary Large Sequence (ALS) set} is a nonempty set of (infinite)
integer sequences, $\mathbf{S}$, such that for any $i$ and any sequence
$\{A_k\}\in\mathbf{S}$ there exists a sequence $\{B_k\}\in\mathbf{S}$ such that
$B_i=A_i+1$, and if $j<i$, then $B_j=A_j$.
\end{defi}

The definition of the ALS set is such that if any finite list of integers
appears as a prefix of any sequence in $\mathbf{S}$, the last integer can be
increased by $1$ (and, by induction, can be replaced by any larger number),
and the result would still be a prefix of a sequence in $\mathbf{S}$.
This being the case, any finite list of integers appearing as a prefix of any
sequence in $\mathbf{S}$ remains a prefix in $\mathbf{S}$ if one extends it
by another element, given that this element exceeds some threshold value.
Other than being ``large enough'', the new element can be chosen arbitrarily.

\begin{defi}[ASRAM]
An \emph{ASRAM} is a computational model that provides the same functionality
as the RAM, but also allows calls to a pseudofunction, ``$\textit{ALN}()$'',
that returns integers.

An ASRAM is said to compute the function $f(\inp)$ if for each
$\inp$ there exists an ALS set, $\mathbf{S}$, such that for any
$\{A_i\}\in\mathbf{S}$, if the $i$'th invocation of $\textit{ALN}()$ is
replaced by the constant $A_i$ then the resulting RAM calculates
$f(\inp)$.

The run-time of the ASRAM on a given $\inp$ is the run-time of the
underlying RAM with the worst-case choice of $\{A_i\}$. (The ASRAM is taken to
be non-terminating if this worst-case is unbounded.)
\end{defi}

This definition reflects a situation where every application of
$\textit{ALN}()$ returns a number that is arbitrary other than being
sufficiently large with respect to everything that occurred in earlier steps
of the ASRAM's execution.

The ASRAM can be used to investigate a scenario in which an unbounded number
of ALNs are required. However, we can also use it for the intermediate
scenario, where only a predefined number (e.g.\ $2$) of ALNs are available to
the algorithm. This is simply done by limiting the number of times the
$\textit{ALN}()$ pseudofunction can be executed. The original ARAM is an ASRAM
limited to use $\textit{ALN}()$ at most once.

\section{Arithmetic complexity}
At face value, one may believe that multiple arbitrary numbers are no more
powerful than a single arbitrary number. However, this is not so.

In this section we show that the extra power of arbitrary sequences is present
already in the traditional arithmetic complexity model, this being the
computational model in which the basic operations used are the four arithmetic
functions, $\{+,\minus,\times,\div\}$, where
$a\minus b \defeq \max(a-b,0)$ and ``$\div$'' is integer division.
We also use ``$\bmod$'' freely in the arithmetic model, because it is an
operation straightforward to simulate using the available operations.

We stress that despite use of the name ``arithmetic'', the results presented
rely heavily on the non-arithmetic nature of ``$\div$''. In the literature
\cite{Ziegler:Nonarithmetic},
this operation is, in fact, sometimes referred to as non-arithmetic
division.

Formally stated, what we prove is the following theorem.

\begin{thm}
The class of functions that can be computed in polynomial time by an arithmetic
ASRAM is strictly larger than the class of functions that can be computed in
polynomial time by an arithmetic ARAM.
\end{thm}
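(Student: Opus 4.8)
The plan is to prove the two directions separately. The inclusion is immediate: an arithmetic ARAM is, by definition, an arithmetic ASRAM that is permitted to invoke $\textit{ALN}()$ at most once, and the run-time conventions agree on such programs; hence every function computable in polynomial time by an arithmetic ARAM is computable in polynomial time by an arithmetic ASRAM. All the work lies in the strict separation, i.e.\ in exhibiting a function $f$ that a polynomial-time arithmetic ASRAM computes (in fact using only two invocations of $\textit{ALN}()$) but that no polynomial-time arithmetic ARAM computes.

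For the positive side I would design $f$ around the one extra capability that two arbitrary numbers confer over one: after the first call returns some $A_1$, the second call returns an $A_2$ that is arbitrarily large \emph{relative to} $A_1$, so that quantities such as $A_2 \div A_1$, $A_2 \bmod A_1$, $A_2 \div A_1^2$, and so on behave like a fresh supply of unconstrained ``digits'' layered on top of $A_1$. Using integer division (the crucial non-arithmetic ingredient flagged in the text) one can, with a constant number of operations, pack data depending on $\inp$ and $A_1$ into a base-$A_1$ expression, let $A_2$ dominate it, and then unpack in base $A_2$; iterating this pattern a polynomial number of times gives a genuinely two-stage ``eventually-correct'' computation. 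Concretely I would take $f$ to be the indicator (or a tabulation) of a relation of the two-quantifier shape $\forall^{\infty}\!a\,\forall^{\infty}\!b\;R(\inp,a,b)$, where $R$ is checkable by a fixed polynomial-size arithmetic expression; the ASRAM simply evaluates $R(\inp,A_1,A_2)$, and the ALS set guaranteed by the Definition of the ASRAM supplies exactly the ``for all large $A_1$, then for all large $A_2$'' semantics needed for correctness.

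The heart of the argument — and the step I expect to be the main obstacle — is the matching lower bound: no polynomial-time arithmetic ARAM computes this $f$. For this I would prove a normal-form lemma describing the behaviour of a $t$-step arithmetic program in $\inp$ and a single large parameter $A$. On each of the finitely many branches consistent with all sufficiently large $A$, every register's value, viewed as a function of $A$, falls into one of a bounded number of asymptotic regimes: either it grows (behaving like a term of positive, though possibly exponential, degree in $A$) or it is eventually independent of $A$, in which case its limit is a fixed arithmetic expression in $\inp$. Since the output register must be of the second kind with limit $f(\inp)$, this confines the functions computable by polynomial-time arithmetic ARAMs to a class definable by a single ``$\forall^{\infty}$'' over polynomial-size arithmetic predicates, and a diagonalisation then shows the two-quantifier $f$ escapes that class. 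The delicate point is that $\div$ and $\minus$ are not continuous and can cut the parameter axis into many pieces, so establishing that only boundedly many asymptotic regimes occur per register — and that the relevant limits and branching thresholds are suitably controlled — requires a careful induction over the program's instructions rather than any smoothness argument.
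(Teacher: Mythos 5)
Your inclusion direction is fine, but the separation argument has a genuine gap at exactly the point you identify as the main obstacle: the lower bound against polynomial-time arithmetic ARAMs is never actually established. The normal-form lemma you sketch is close to definitional on the positive side (an ARAM computes, by definition, the eventual value of a polytime RAM predicate $\rho(\inp,A)$ for almost all $A$), but the substantive claims --- that a register's dependence on $A$ falls into boundedly many controlled asymptotic regimes despite $\div$ and $\minus$, that this confines ARAM-computable functions to a single-$\forall^{\infty}$ class, and above all that a diagonalisation produces a two-quantifier function escaping that class while remaining computable by a two-ALN polynomial-time arithmetic ASRAM --- are all asserted rather than proved. The diagonalisation in particular would require an effective enumeration and simulation of all polynomial-time arithmetic ARAMs inside the arithmetic model, and the paper's own conclusion (that quantifying the ASRAM's extra arithmetic power remains open) suggests no such hierarchy theorem is at hand. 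There is also a smaller semantic slip on the positive side: the indicator of $\forall^{\infty}a\,\forall^{\infty}b\,R(\inp,a,b)$ is not what the ASRAM computes unless $R$ is eventually constant in the nested sense; when the statement is false because $R$ oscillates, no ALS set certifies output $0$.

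The paper avoids all of this by separating on growth rate rather than on quantifier complexity. It exhibits an explicit arithmetic ASRAM (Algorithm~\ref{A:ASRAM}) that computes $2^{2^{2^x}}$ in $\BigO(x)$ steps, using a chain of ALNs $A_1,\ldots,A_x$, each large relative to its predecessor, together with the identity that $P(A)\bmod(A-y)$ evaluates the polynomial $P$ at $y$; repeated squaring of exponents via this modular trick yields the tower. The matching lower bound is not proved from scratch but imported from Bshouty, Mansour, Schieber and Tiwari \cite{Bshouty:exp}: an $\text{ARAM}[+,\minus,\times,\div]$ needs $\Omega(\sqrt{x})$ time to compute $2^{2^x}$, which makes the target value unreachable in polynomial time with a single ALN. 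If you want to salvage your approach, the most direct fix is to replace your bespoke lower bound with a citation to an existing ARAM lower bound of this growth-rate type and choose the separating function accordingly; otherwise you must actually prove the hierarchy theorem you are assuming.
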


\begin{proof}

Consider Algorithm~\ref{A:ASRAM}.
This algorithm utilises the fact that for an ALN, $A$, and a polynomial, $P$,
the calculation $P(A) \bmod (A-x)$ is an algorithm for computing
$P(x)$.\footnote{This is evident from the fact that in any ring, $R$, for any
elements $x,y\in R$ and any polynomial $P$ over $R$, the following holds:
$x\equiv y \pmod R \Rightarrow P(x) \equiv P(y) \pmod R$.}
Utilising this property, the algorithm calculates each $P_i$ so that
it equals $A_i^{2^{2^{(x-i)}}}$. After $\BigO(x)$ steps, it returns the value
$2^{2^{2^x}}$.

\begin{algorithm}
\caption{An arithmetic ASRAM calculating $2^{2^{2^x}}$ in $\BigO(x)$ time}
\label{A:ASRAM}
\begin{algorithmic}[1]
\For {$i\in 1,\ldots, x$}
\State $A_i\Leftarrow\textit{ALN}()$
\EndFor
\State $P_x\Leftarrow A_x\times A_x$
\For {$i\in x-1,\ldots, 1$}
\State $\textit{temp}\Leftarrow P_{i+1} \bmod (A_{i+1}-A_i)$
\State $P_i\Leftarrow P_{i+1} \bmod (A_{i+1}-\textit{temp})$
\EndFor
\State $\textit{temp}\Leftarrow P_1 \bmod (A_1-2)$
\State $\textit{rc}\Leftarrow P_1 \bmod (A_1-temp)$
\State \Return $\textit{rc}$
\end{algorithmic}
\end{algorithm}

From \cite{Bshouty:exp}, we know that an $\text{ARAM}[+,\minus,\times,\div]$
can only calculate
$2^{2^x}$ in $\Omega(\sqrt{x})$ time. The fact that Algorithm~\ref{A:ASRAM}
calculates it in $\Th(\log{x})$ time makes this an example that is polynomial
for an ASRAM but not for an ARAM.\footnote{Though not appearing in this
description, Algorithm~\ref{A:ASRAM} can be modified in a straightforward way
to calculate
$2^{2^x}$ for any $x$, and not just for $x$ values that are powers of $2$.}
\end{proof}

\section{$\text{ASRAM}[+,\leftarrow,\textit{Bool\/}]$}

In the remaining sections we consider ASRAMs with instruction sets that are
more powerful than the arithmetic operations. These include left shifting
($a \leftarrow b \defeq a\times 2^b$) and bitwise Boolean operations.
These ASRAMs will be investigated in terms of their abilities to accept
languages (rather than to calculate functions). We use the notation
$\text{$f(x)$-RAM}[\textit{op\/}]$ to denote the class of languages recognisable
by a $\text{RAM}[\textit{op\/}]$ in $f(x)$ time. (``RAM'' can be replaced
by ``ARAM'' or ``ASRAM''.) For example, $\text{$\BigO(1)$-ARAM}[\textit{op\/}]$ is
the class of languages recognisable by an $\text{ARAM}[\textit{op\/}]$ in
constant time, whereas $\text{P-ASRAM}[\textit{op\/}]$ is the class of languages
recognisable by an $\text{ASRAM}[\textit{op\/}]$ in polynomial time.

We begin by examining a case where ASRAMs afford no additional computational
power.

\begin{thm}\label{T:AS_no_div}
$\text{P-ASRAM}[+,\leftarrow,\textit{Bool\/}]=\text{PSPACE}$, where PSPACE is
the class of languages recognisable by a Turing machine (TM) working on a tape
of polynomial size.
\end{thm}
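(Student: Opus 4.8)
The plan is to prove the two inclusions separately, the non‑trivial content being that here arbitrary sequences buy nothing. For $\text{PSPACE}\subseteq\text{P-ASRAM}[+,\leftarrow,\textit{Bool}]$ I would invoke the equality $\text{PSPACE}=\text{P-RAM}[+,\leftarrow,\textit{Bool}]$ — the inclusion $\text{PSPACE}\subseteq\text{P-RAM}[+,\leftarrow,\textit{Bool}]$ being one of the simulations cited in the introduction, and the reverse a standard positional simulation of a polynomial‑time $\text{RAM}[+,\leftarrow,\textit{Bool}]$ by an alternating polynomial‑time machine — together with the trivial remark that an ASRAM may ignore its $\textit{ALN}()$ facility, so $\text{P-RAM}[+,\leftarrow,\textit{Bool}]\subseteq\text{P-ASRAM}[+,\leftarrow,\textit{Bool}]$.

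The real work is the reverse inclusion $\text{P-ASRAM}[+,\leftarrow,\textit{Bool}]\subseteq\text{PSPACE}$, which I would obtain by showing $\text{P-ASRAM}[+,\leftarrow,\textit{Bool}]\subseteq\text{P-RAM}[+,\leftarrow,\textit{Bool}]$ — i.e.\ that the $\textit{ALN}()$ calls can be removed at polynomial cost — and then appealing to the equality above. Given an $\text{ASRAM}[+,\leftarrow,\textit{Bool}]$ program running in time $p(n)$, it issues at most $p(n)$ calls to $\textit{ALN}()$, and after $s$ steps every register of a $\text{RAM}[+,\leftarrow,\textit{Bool}]$ holds a value whose length is at most a tower of exponentials of height $\BigO(s)$; hence a value strictly larger than everything present so far can itself be produced from $\inp$ in $\BigO(s)$ further $[+,\leftarrow,\textit{Bool}]$ steps (iterate $x\Leftarrow 1\leftarrow x$). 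The simulating RAM therefore keeps a step counter and replaces the $i$'th $\textit{ALN}()$ call by a subroutine that outputs such a ``canonically large'' value $T_i$ — say a pure power of $2$ exceeding the current contents of every live register — at a total overhead of $\BigO(p(n)^2)$.

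The crux is showing that this substitution is \emph{correct}, i.e.\ that feeding the $T_i$ into the program returns $f(\inp)$, and it is here that the restriction to $\{+,\leftarrow,\textit{Bool}\}$ — in particular, as in the previous section, the \emph{absence} of $\div$ — is used. The key structural lemma I would establish is that a value formed from $\inp$, constants and a ``generic'' arbitrary‑large quantity $A$ by $+$, $\leftarrow$ and bitwise Boolean operations decomposes, in binary, into \emph{concrete} segments (functions of $\inp$ and the constants alone) separated by \emph{$A$‑dependent} windows sitting at astronomically high positions, with no carry and no bitwise interaction crossing between the two kinds of region; consequently every equality test, every indirectly‑addressed access, and the returned value depend on the concrete segments only. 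Since the defining closure property of an ALS set forces the witnessing set $\mathbf{S}$ to contain, for each $i$, sequences whose $i$'th entry is \emph{any} value above a threshold, a well‑defined ASRAM can never let its control flow or its output depend on the contents of an $A$‑dependent window; hence the concrete segments — and therefore the output — are unchanged when each arbitrary large number is replaced by our canonical $T_i$. Iterating over the (polynomially many) $\textit{ALN}()$ calls gives the claim.

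I expect the main obstacle to be making the ``concrete segments versus $A$‑dependent windows'' decomposition precise and stable under all of $+$, $\leftarrow$ and the Boolean operations at once — in particular controlling nested shifts whose shift amounts are themselves $A$‑dependent, and ruling out additive carries that might propagate out of a concrete segment — and then turning the purely combinatorial ALS‑set closure property into the operational statement that a valid ASRAM's behaviour is invariant under replacing its arbitrary large numbers by sufficiently large values of our own choosing.
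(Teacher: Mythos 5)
Your overall architecture differs from the paper's. The paper does not go through $\text{P-RAM}[+,\leftarrow,\textit{Bool\/}]$ at all: it simulates the ASRAM \emph{symbolically} on a PSPACE TM, by choosing the ALNs to be powers of two $2^{\omega_1},2^{\omega_2},\ldots$, reusing the normal-form machinery of Theorem~4 of \cite{Brand:ALNs} essentially verbatim, and observing that the only new ingredient is deciding which of two formal expressions $a_0+\sum_{i=1}^{k}a_i\omega_i$ is larger --- a lexicographic comparison, since each $\omega_i$ is large relative to all $\omega_j$ with $j<i$. Your plan instead eliminates the $\textit{ALN}()$ calls inside the RAM model by substituting concretely computed large values $T_i$, and then appeals to $\text{P-RAM}[+,\leftarrow,\textit{Bool\/}]=\text{PSPACE}$. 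This is a genuinely different, and stronger-looking, reduction; but note that the paper explicitly warns, in the remark between the theorem statement and its proof, that this style of argument --- removing ALNs by showing the machine computes nothing a plain RAM cannot --- is exactly what is \emph{not} known to hold for functions over this operation set. Your correctness claim must therefore be confined to the accept/reject bit rather than to ``returning $f(\inp)$'' as a value.

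The genuine gap is that the ``concrete segments versus $A$-dependent windows'' lemma, which you rightly identify as the crux, is precisely the hard content and is left unproved; it is in essence the normal-form analysis the paper obtains by citation. Two specific points need repair. First, the stated choice of $T_i$ (``a power of two exceeding the current contents of every live register'') is not obviously sufficient: bit-positions involving earlier $\omega_j$ can keep growing \emph{after} the $i$'th call, so $T_i$ must dominate not the current register contents but everything the remaining polynomially many steps can build from them (your tower iteration happens to supply enough headroom, but the criterion as stated does not). Second, the inference from the ALS closure property to ``the output is unchanged at $T_i$'' runs backwards: the witnessing ALS set guarantees the correct answer only for entries above \emph{its own} thresholds, which may be uncomputably large, so one cannot assume $(T_1,\ldots,T_k)$ is a prefix of a sequence in $\mathbf{S}$. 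You must first prove, from the structural lemma alone, that the decision stabilises for all suitably shaped $A_i$ above a \emph{computable} threshold, and only then intersect with $\mathbf{S}$ to identify that stable value with the correct answer. As written, the proposal asserts the conclusion of that argument rather than deriving it.
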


Both the RAM and ARAM working with the same operation set have been shown
\cite{Simon:division2, Brand:ALNs} to
be able to recognise PSPACE in polynomial time.

We remark that if it was known that the class of functions recognisable by
$\text{ARAM}[\textit{op\/}]$ working under some time constraints equals the
class of functions recognisable by a $\text{RAM}[\textit{op\/}]$ working under
the same time constraints, then this result would have been directly also
applicable to ASRAMs, by recursively reducing the number of ALNs required.
However, \cite{Brand:ALNs} only proves that ALNs do not add power for ARAMs
that are language recognisers. It is still possible that there are functions
that can be calculated by an $\text{ARAM}[+,\leftarrow,\textit{Bool\/}]$ but
not by a $\text{RAM}[+,\leftarrow,\textit{Bool\/}]$, and the availability of
the pseudofunction ``$\textit{ALN}()$'' may add more functions still.
Theorem~\ref{T:AS_no_div} shows that for this particular operation set, the
fact that an ARAM does not recognise more languages than a RAM carries over
also to ASRAMs. However, for results regarding language recognition, this is
by no means known to hold for a general operation set.

\begin{proof}[Proof of Theorem~\ref{T:AS_no_div}]
The proof is a direct extension of the proof of Theorem~4 in
\cite{Brand:ALNs}. In the original proof, it was shown that a polytime ARAM
with said operation set can be simulated by a PSPACE TM, given that a PSPACE
TM can determine which is larger of a pair of expressions of the
form $a\omega+b$, where $a$ and $b$ are known nonnegative integer constants
and $2^\omega$ is
the ARAM's ALN. Because $\omega$ is, by definition, ``large enough'', the
answer can be reached by simple lexicographical comparison.

In the ASRAM scenario,
we begin by picking our ALNs, $A_1, A_2,\ldots$ so as to be powers of two:
$2^{\omega_1}, 2^{\omega_2},\ldots$. We then simulate the ASRAM in exactly
the same way as was done for the ARAM.
Utilising exactly the same proof as that of
Theorem~4 in \cite{Brand:ALNs}, we conclude now that a polytime ASRAM with
said operation set can be simulated by a PSPACE TM, given that a PSPACE
TM can determine which is larger of a pair of expressions of the
form $a_0 + \sum_{i=1}^{k} a_i \omega_i$, where both $k$ and $a_0,\ldots, a_k$
are given nonnegative integers.

Once again,
because each $\omega_i$ is, per assumption, large enough compared to all
$\omega_j$ with $j<i$, a simple lexicographic comparison is enough to determine
which of two formal expressions has the larger value.

Thus, the entire simulation of the ASRAM can be performed in PSPACE.
\end{proof}

\section{$\text{ASRAM}[+,/,\leftarrow,\textit{Bool\/}]$}

In this section we prove two theorems.

\begin{thm}\label{T:O1_ASRAM}
$\text{$\BigO(1)$-ASRAM}[+,/,\leftarrow,\textit{Bool\/}]=\text{AH}$.
\end{thm}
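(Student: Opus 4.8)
The plan is to prove the two inclusions $\text{$\BigO(1)$-ASRAM}[+,/,\leftarrow,\textit{Bool\/}]\subseteq\text{AH}$ and $\text{AH}\subseteq\text{$\BigO(1)$-ASRAM}[+,/,\leftarrow,\textit{Bool\/}]$ separately. For the first, I would start from the observation that an $\BigO(1)$-time ASRAM invokes $\textit{ALN}()$ at most a fixed number $c$ of times on every input. The key step is to ``flatten'' the acceptance condition: the ASRAM accepts $\inp$ if and only if there exist threshold functions $\theta_1,\theta_2(\cdot),\ldots,\theta_c(\cdot)$ such that for every $a_1\ge\theta_1$, every $a_2\ge\theta_2(a_1)$, \ldots, every $a_c\ge\theta_c(a_1,\ldots,a_{c-1})$, the underlying RAM with these constants substituted for the respective $\textit{ALN}()$ calls accepts $\inp$. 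One direction of this equivalence takes a witnessing ALS set $\mathbf{S}$ and reads off the $\theta_i$, using that the RAM consults only the prefix of the sequence corresponding to the calls it actually makes and that, by the ALS closure axiom, every prefix with $a_i\ge\theta_i(a_{<i})$ is realised inside $\mathbf{S}$; the other direction takes the $\theta_i$ and verifies that $\{\{a_i\}:a_i\ge\theta_i(a_{<i})\text{ for all }i\}$ is an ALS set. Since the RAM runs for $\BigO(1)$ steps, ``the RAM with constants $a_1,\ldots,a_c$ accepts $\inp$'' is a total recursive predicate of $(\inp,a_1,\ldots,a_c)$; hence the flattened condition, written in prenex form with the $\exists\theta_i$ interleaved with the $\forall a_i$, is an arithmetical formula with $2c$ quantifiers over a recursive matrix, so the language lies in $\Sigma_{2c}\subseteq\text{AH}$.

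For the converse, fix $L\in\text{AH}$; it suffices to treat $L\in\Sigma_n$, the general case following since an ASRAM can negate its (always defined) output. By the Kleene normal form, $\inp\in L\iff\exists y_1\,\forall y_2\,\cdots\,Q_n y_n\,\varphi(\inp,\bar y)$ for a decidable matrix $\varphi$ checkable in time elementary in its arguments. I would ``monotonise'' each quantifier so that quantifying over all naturals becomes equivalent to quantifying over all sufficiently large naturals: replace $\exists y_i\,\psi$ by $\exists y_i\,[\,\exists y_i'\le 2^{y_i}\,\psi\,]$ and $\forall y_i\,\psi$ by $\forall y_i\,[\,\forall y_i'\le 2^{y_i}\,\psi\,]$, each new body being monotone in $y_i$ and hence eventually constant in $y_i$ with the correct limiting value. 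The $\BigO(1)$-time ASRAM then issues $n$ calls $A_1,\ldots,A_n$ to $\textit{ALN}()$; by the flattening of the previous paragraph, read in reverse, each call contributes one layer of ``$\exists$ threshold, then $\forall$ sufficiently large value'', so it remains only to evaluate the fully bounded formula $\exists y_1'\le 2^{A_1}\,\forall y_2'\le 2^{A_2}\,\cdots\,Q_n y_n'\le 2^{A_n}\,\varphi(\inp,\bar y')$ in $\BigO(1)$ RAM steps.

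The main obstacle is exactly this evaluation, whose naive cost is a tower of the $A_i$. It must be carried out by a bit-parallel simulation in the spirit of \cite{Brand:ALNs}: pack the search space, of size $2^{A_1+\cdots+A_n}$ with one elementary-length evaluation of $\varphi$ per slot, into a single huge register; compute $\varphi$ on all slots at once with shifts and bitwise Boolean operations; and collapse the register with the alternating pattern of bitwise OR/AND over blocks dictated by the quantifier prefix. Making this precise requires checking that every auxiliary constant involved --- block widths, periodic bit-masks, and the overall register size, each a fixed-height tower of $\inp+A_1+\cdots+A_n$ --- is produced from the $A_i$ in $\BigO(1)$ steps using only $+$ and $\leftarrow$ (so that, in particular, no genuine multiplication of two arbitrary numbers is ever needed as a shift amount), and that integer division is precisely what extracts sub-blocks and synthesises the periodic masks --- which is where ``$/$'', absent in Theorem~\ref{T:AS_no_div}, becomes essential. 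Once this subroutine is available, combining it with the $n$ $\textit{ALN}()$ calls and the constantly many arithmetic steps that build the scratch yields the required $\BigO(1)$-time ASRAM, establishing the equality.

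I expect the bit-parallel evaluation --- together with the bookkeeping that all the needed gigantic integers are $\leftarrow$-towers of $\inp+\sum_iA_i$ and therefore formable in constant time --- to be the technically heaviest part, while the flattening lemma and the monotonisation are conceptually central but combinatorially light.
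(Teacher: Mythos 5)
Your upper--bound direction is essentially the paper's argument: you flatten the ASRAM's acceptance condition into a first--order prefix of length $2c$ over a recursive matrix, exactly as the paper writes the condition as $\forall N_1\exists A_1\cdots\forall N_k\exists A_k\,\phi(\inp,A_1,\ldots,A_k)$ and places it in $\Pi^0_{2k}$ (your dual $\exists\theta_i\,\forall a_i\!\ge\!\theta_i$ pattern is equivalent here because, by the ASRAM semantics, the output is eventually constant in each $A_i$, so ``for all thresholds there is a larger accepting value'' and ``some threshold beyond which all values accept'' coincide). Your observation that the $\theta_i$ are just numbers once interleaved into the prefix, and that the threshold sets form an ALS set, is the right justification. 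Likewise, your ``monotonisation'' of each quantifier is the same device as the paper's bounding of $a_i$ by the ALN $A_i$ (``if an $a_i$ exists it can be bounded by a sufficiently large $A_i$; if something holds for all $a_i$ it holds for all bounded $a_i$'').

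The genuine divergence --- and the gap --- is in how you finish the lower--bound direction. You bound \emph{all} $n$ quantifiers and are then left having to evaluate a fully bounded formula $\exists y_1'\!\le\!2^{A_1}\cdots Q_ny_n'\!\le\!2^{A_n}\,\varphi$ in $\BigO(1)$ steps on a plain $\text{RAM}[+,/,\leftarrow,\textit{Bool\/}]$, which you propose to do by a bespoke bit--parallel simulation. This is precisely the technically hard content you cannot simply assert: the constant--time elementary--simulation results you appeal to are proved in \cite{Brand:ALNs} and \cite{Simon:division2} for richer operation sets (in particular with subtraction and/or multiplication, used e.g.\ to synthesise periodic masks as $(2^{mw}-1)/(2^w-1)$), and it is not established that the construction survives with only $+$, exact division, shift and Boolean operations; your proposal defers the entire crux to this unproven subroutine. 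The paper avoids the issue completely: it bounds only the first $k-1$ quantifiers, leaving the innermost one unbounded, so that after the $k-1$ ALNs are fixed the residual formula is in $\Sigma^0_1\cup\Pi^0_1$, and it then invokes Theorem~\ref{T:O1_ARAM} (Theorem~5 of \cite{Brand:ALNs}) as a black box, spending one further ALN on that ARAM simulation. Your plan admits the same easy repair: once $A_1,\ldots,A_n$ are generated, your fully bounded formula is a \emph{decidable} predicate of $(\inp,A_1,\ldots,A_n)$, hence trivially in $\Sigma^0_1$, so one additional call to $\textit{ALN}()$ plus Theorem~\ref{T:O1_ARAM} evaluates it in $\BigO(1)$ time --- no hand--built bit--parallel evaluator is needed. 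As written, however, the heaviest step of your proof is a claim you have not proved and whose truth for this restricted operation set is not obvious (compare Theorem~\ref{T:AS_no_div}, which shows how delicately the power of these machines depends on the presence of division).
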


\begin{thm}\label{T:om1_ASRAM}
$\text{$\om(1)$-ASRAM}[+,/,\leftarrow,\textit{Bool\/}]\supset\text{AH}$.
\end{thm}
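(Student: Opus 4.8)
The plan is to prove the two sides of the strict inclusion separately. For $\text{AH}\subseteq\text{$\om(1)$-ASRAM}[+,/,\leftarrow,\textit{Bool\/}]$ I would invoke Theorem~\ref{T:O1_ASRAM}, which already gives $\text{$\BigO(1)$-ASRAM}[+,/,\leftarrow,\textit{Bool\/}]=\text{AH}$, together with the elementary observation that a constant-time ASRAM can be padded with a dummy loop whose iteration count grows without bound with $\inp$ (for instance, one counting down from $\inp$): this leaves the recognised language unchanged but makes the machine run in $\om(1)$ time. Thus the whole content of the theorem lies in exhibiting a single language recognised in $\om(1)$ time that is not in $\text{AH}$.

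For that I would take the $\omega$-th Turing jump $L=\{\langle k,x\rangle : x\in\emptyset^{(k)}\}$, under a fixed computable pairing for which $\langle k,x\rangle\ge k$ and whose projections are computable on a $\text{RAM}[+,/,\leftarrow,\textit{Bool\/}]$ in $\BigO(\log k)$ steps (for example $\langle k,x\rangle=2^k(2x+1)$). That $L\notin\text{AH}$ is classical: if $L$ were in $\Sigma^0_n$ for some fixed $n$, then, since $\emptyset^{(n+1)}$ many-one reduces to $L$ via $x\mapsto\langle n+1,x\rangle$, the $\Sigma^0_{n+1}$-complete set $\emptyset^{(n+1)}$ would lie in $\Sigma^0_n$, contradicting the properness of the arithmetic hierarchy. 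Hence it only remains to show $L\in\text{$\om(1)$-ASRAM}[+,/,\leftarrow,\textit{Bool\/}]$.

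To establish that I would revisit the proof of the inclusion $\text{AH}\subseteq\text{$\BigO(1)$-ASRAM}[+,/,\leftarrow,\textit{Bool\/}]$ contained in Theorem~\ref{T:O1_ASRAM}. Applied to the $\Sigma^0_k$-complete set $\emptyset^{(k)}$, that proof yields a constant-time ASRAM deciding the set, and the construction is uniform in $k$: it processes the $k$ quantifier alternations by a straightforward induction, spending only a bounded number of extra $\textit{ALN}()$ invocations and a bounded number of extra RAM steps per alternation. Extracting this uniformity gives a single ASRAM program $M$ that, on input $\langle k,x\rangle$, decodes $k$ and $x$ and then runs the level-$k$ procedure, making $g(k)$ calls to $\textit{ALN}()$ and halting within $T(k)$ steps, for some computable $g$ and $T$; replacing $T$ by its running maximum makes $T$ monotone, and since $\langle k,x\rangle\ge k$ the machine $M$ then halts within $T(\inp)$ steps on every input $\inp$. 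It remains only to observe that $T$ is unbounded, so that $T(\inp)$ is a legitimate $\om(1)$ time bound; but this is automatic, for if $T$ were bounded then $M$ would run in constant time and we would get $L\in\text{$\BigO(1)$-ASRAM}[+,/,\leftarrow,\textit{Bool\/}]=\text{AH}$, contradicting $L\notin\text{AH}$. Hence $M$ places $L$ in $\text{$\om(1)$-ASRAM}[+,/,\leftarrow,\textit{Bool\/}]\setminus\text{AH}$, and the strict inclusion follows.

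The step I expect to be the main obstacle is this uniformity extraction: one has to check that the proof of Theorem~\ref{T:O1_ASRAM} genuinely produces a single ASRAM program parametrised by the level $k$ --- with the ``plumbing'' that carries $k$ into the inductive construction, and the decoding of the pair $\langle k,x\rangle$ under the operation set $[+,/,\leftarrow,\textit{Bool\/}]$, each shown to cost only finitely many steps for each fixed $k$ --- rather than a mere family of unrelated machines, one per level. The other ingredients are routine: the padding argument for the easy inclusion, and reading ``$\om(1)$ time'' as ``bounded by some unbounded computable function of the input'', which is exactly what the bootstrapping argument above delivers.
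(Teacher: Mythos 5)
Your argument is correct in outline, but it reaches the theorem by a genuinely different route from the paper's. The paper's witness outside AH is the TRUTH predicate for arithmetic formulas, with non-membership supplied by Tarski's undefinability theorem; it shows a $\Th(n)$-time ASRAM evaluates a formula encoded in $n$ bits (hence with $\BigO(n)$ quantifiers), and then drives the time bound down to an arbitrary $\om(1)$ function of the input length by artificially padding the input, re-encoding $\inp$ as $(2\inp+1)\times 2^T$. Your witness is instead the $\omega$-jump $\{\langle k,x\rangle : x\in\emptyset^{(k)}\}$, with non-membership in AH obtained directly from the properness of the hierarchy --- essentially the same underlying fact (Post's theorem) that the paper invokes behind Tarski's result, but packaged more elementarily. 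What your route buys is that the padding/re-encoding step becomes unnecessary: because the level $k$ is exponentially smaller than $\langle k,x\rangle=2^k(2x+1)$, the time bound $T(k)\le T(n)$ is already a naturally slowly-growing unbounded function, and you also make explicit the easy inclusion $\text{AH}\subseteq\text{$\om(1)$-ASRAM}$ that the paper leaves implicit. What it loses is the paper's stronger quantitative conclusion that the running time can be taken as low as \emph{any} prescribed $\om(1)$ function of the input length, which is immediate from the re-encoding trick but not from your fixed $T$. The step you correctly flag as the main obstacle --- that the proof of Theorem~\ref{T:O1_ASRAM} yields a single program uniform in the level $k$, including generating $g(k)$ ALNs in a loop, packaging them for the final single-quantifier ARAM call of Theorem~\ref{T:O1_ARAM}, and decoding $\langle k,x\rangle$ under $[+,/,\leftarrow,\textit{Bool\/}]$ --- is a real piece of work, but it is exactly the same uniformity that the paper's TRUTH-evaluating machine tacitly requires (indeed more so, since that machine must parse an arbitrary encoded formula with $\Th(n)$ quantifiers), so your proposal is no less rigorous than the published argument on this point.
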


In the statements of Theorems~\ref{T:O1_ASRAM} and \ref{T:om1_ASRAM},
``$/$'' is exact division (a weaker form of division, yielding the same results
as integer division, but defined only when the division is without a remainder).
AH refers to the entire arithmetic hierarchy,
$\bigcup_{i=0}^{\infty} \Sigma^0_i \cup \Pi^0_i$.

In order to prove the above, we utilise Theorem~5 of \cite{Brand:ALNs}:

\begin{thm}[\cite{Brand:ALNs}]\label{T:O1_ARAM}
Any recursively enumerable (r.e.) set can be recognised in $\BigO(1)$ time by
an $\text{ARAM}[+,/,\leftarrow,\textit{Bool}]$.
\end{thm}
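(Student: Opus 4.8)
\medskip
\noindent\emph{Proof proposal.}
The plan is to isolate one bounded--computation lemma and then use the ARAM semantics to discharge the bound. Fix, for a given r.e.\ set $L$, a Turing machine $M$ that reaches an accepting state on input $x$ exactly when $x\in L$ and otherwise runs forever, and let $T(x)$ be its running time when $x\in L$. In a $T(x)$--step run $M$ scans at most $W(x)=T(x)+\abs{x}+\BigO(1)$ cells, so the computation is a tableau of at most $T(x)$ rows of width at most $W(x)$, each cell holding a symbol of the finite set $\Gamma\cup(Q\times\Gamma)$.

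The core ingredient is the \textbf{bounded--computation lemma}: a $\text{RAM}[+,/,\leftarrow,\textit{Bool}]$, given $x$ and an integer bound $N$, decides in $\BigO(1)$ steps whether $M$ accepts $x$ within $N$ steps. I would prove this by encoding a candidate height--$N$, width--$(N+\abs{x}+\BigO(1))$ tableau as a single integer $Z$ in base $b=2^{\ceil{\log_2 \abs{\Gamma\cup(Q\times\Gamma)}}}$, consecutive rows occupying consecutive digit blocks. All correctness conditions are local: the first block equals the initial configuration assembled from the bits of $x$ (a re--chunking of the integer $x$, done with shifts, masks and exact divisions in $\BigO(1)$ steps); every cell of every row is the fixed finite function, prescribed by $M$'s transition table, of the three cells above it; and the last row carries an accepting state. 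The transition condition is verified for all cells simultaneously by aligning shifted copies $Z,\ Z/b,\ Z/b^{2},\ Z/b^{W}$, combining them by bitwise--Boolean table lookups (a constant--depth circuit over the $b$--valued digits, each layer one $\BigO(1)$ RAM step) and testing that the resulting violation mask vanishes; exact division is what isolates the digit blocks cleanly, which is why ``$/$'' is needed. The genuinely delicate point --- and the step I expect to be the main obstacle --- is \emph{constructing} such a $Z$ deterministically in $\BigO(1)$ steps instead of iterating $N$ single--step updates; here I would invoke the parallel--tableau machinery already available for big--number RAMs (the same machinery underlying the $\text{PSPACE}\subseteq\text{P-RAM}$ results and the constant--time arithmetic--RAM results cited in the introduction), which assembles the whole height--$N$ tableau by a bounded number of base--$b$ operations. (A parallel route, if that construction proves awkward, is the Diophantine representation of r.e.\ sets: $x\in L$ iff $p(x,\vec y)=0$ for a fixed polynomial $p$ and some $\vec y$, whereupon the lemma is replaced by an $\BigO(1)$--step simultaneous evaluation of $p$ over the product grid $\{0,\dots,N-1\}^{\abs{\vec y}}$.)

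Granting the lemma, the ARAM is the obvious one: read $\inp=x$, set $A\Leftarrow\textit{ALN}()$, and return the lemma's verdict on $(x,A)$. For $x\in L$ this verdict is $1$ whenever $A\ge T(x)$, hence for all but finitely many $A$, which is exactly $\mathop\forall\limits^\infty\!A\,\rho(x,A)=\rho_A(x)$; for $x\notin L$ it is $0$ for \emph{every} $A$. The point where the arbitrariness of $A$ does real work is that the threshold $T(x)$ need not be computable from $x$ --- the ARAM definition only asks that some threshold exist --- which is precisely what lets a constant--time machine reach beyond the decidable sets. Since the lemma's subroutine runs in $\BigO(1)$ steps for \emph{every} pair $(x,A)$, the worst case over all choices of $A$ is $\BigO(1)$; so the ARAM terminates and recognises $L$ in constant time.
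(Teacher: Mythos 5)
First, a point of order: the paper does not prove this statement at all --- it is Theorem~5 of \cite{Brand:ALNs}, imported verbatim by citation --- so there is no in-paper proof to compare your attempt against. Judged on its own merits, your top-level reduction is correct and is certainly the intended way such a theorem is deployed: take $A\Leftarrow\textit{ALN}()$, treat $A$ as a step bound for the recogniser $M$, and observe that for $x\in L$ the bounded verdict is $1$ for all $A\geq T(x)$ (hence for almost all $A$, with no need for $T(x)$ to be computable), while for $x\notin L$ it is $0$ for every $A$; the worst-case run-time over all $A$ is then the run-time of the bounded-computation subroutine. That wrapper matches the ARAM semantics exactly.

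The gap is that everything is pushed into the ``bounded-computation lemma,'' and that lemma \emph{is} the theorem; you do not actually establish it. Verifying a given height-$N$ tableau in $\BigO(1)$ big-integer operations is plausible, but a deterministic RAM cannot guess the tableau --- it must \emph{construct} it --- and the machinery you invoke for that step does not deliver a constant bound: the $\text{PSPACE}\subseteq\text{P-RAM}$ constructions of \cite{Schonhage:rams, Simon:Multiplication, Pratt:VMs, Bertoni:pTime_RAM} run in time polynomial in the tableau dimensions, and iterating the (correct, $\BigO(1)$-per-round) parallel local update $N$ times costs $\Th(N)$, not $\BigO(1)$; even path-doubling leaves you with $\Om(\log N)$ rounds. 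So the ``main obstacle'' you flag is not discharged by citation. Your parenthetical fallback via the MRDP theorem is the more credible route to a genuine $\BigO(1)$ bound, since a \emph{fixed} polynomial evaluated simultaneously over the packed grid $\{0,\ldots,N-1\}^m$ is a bounded number of vectorised operations followed by a zero test --- but even there you owe two nontrivial constructions in the instruction set $\{+,/,\leftarrow,\textit{Bool}\}$, which contains no multiplication: generating the packed grid itself, and performing componentwise products of packed vectors (both are classically done with exact-division tricks such as extracting the digit pattern $1,a,a^2,\ldots$ from $\lfloor 2^{2k}/(2^k-a)\rfloor$, but none of this appears in your sketch). As written, the proposal is a correct reduction of the theorem to an unproved lemma that carries all of its content.
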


\begin{proof}[Proof of Theorem~\ref{T:O1_ASRAM}]

Any ASRAM running $k$ steps
(and therefore utilising at most $k$ ALNs, $A_1,\ldots, A_k$) can be transformed
into an equivalent ASRAM that first generates $A_1,\ldots, A_k$ and then
performs any other computation. The part of the computation after the
generation of the ALNs is a RAM computation. We consider this RAM as a
language acceptor, and denote the predicate logically-equivalent to it
$\phi(\inp,A_1,\ldots, A_k)$.

The function
computed by the ASRAM is the value of $\phi$ for a sufficiently large $A_k$
given a sufficiently large $A_{k-1}$, given a sufficiently large $A_{k-2}$,
etc.. Technically, given an appropriate choice of $A_1,\ldots, A_{k-1}$, we say
that there exists some threshold, $N_k$, such that for any choice of $A_k$
satisfying $A_k>N_k$, the value of $\phi(\inp,A_1,\ldots, A_k)$ is
always the same, and it is taken to be the value of the ASRAM. In particular,
the ASRAM accepts if and only if for any choice of threshold, $N_k$, there
exists a value of $A_k$ larger than the threshold for which
$\phi(\inp,A_1,\ldots, A_k)$ is true.

This indicates that the predicate computed by the ASRAM can be formulated as
\[
\forall N_1 \exists A_1 \forall N_2 \exists A_2
\cdots \forall N_k \exists A_k
(A_1>N_1) \cap (A_2>N_2) \cap \cdots \cap (A_k>N_k) \cap
\phi(\inp,A_1,\ldots, A_k).
\]

Because
$\phi$ is calculated by a RAM, it is known to be in $\Sigma^0_1$, so the
predicate computed by the ASRAM is, by definition, in $\Pi^0_{2k}$.

We have thus established that the formula computed by the ASRAM is in AH.
We now show that any formula, ``$\phi=\exists a_1 \forall a_2 \cdots
\exists a_k \chi(\inp,a_1,\ldots, a_k)$'', with any $k$ quantifiers,
can be computed
by a constant time ASRAM with $k$ ALNs. The way to do this is to bound every
$a_i$ except the last by an ALN
$A_i$. The formula $\phi$ is logically equivalent to
``$\forall N_1 \exists A_1 \cdots
\forall N_{k-1} \exists A_{k-1}
\exists (a_1<A_1) \forall (a_2<A_2) \cdots
\exists a_k$ such that
$(A_1>N_1)\cap \cdots \cap (A_{k-1}>N_{k-1}) \cap
\chi(\inp,a_1,\ldots, a_k)$''.
(If an $a_i$ exists to satisfy some condition, its value can be bounded by some
$A_i$, whereas if
something is true for every $a_i$, it will also be true for every bounded
$a_i$.)

This derivation shows that any formula with $k$ quantifiers can be computed as
a formula with a single quantifier, given $k-1$ ALNs.

We now make use of Theorem~\ref{T:O1_ARAM}, which is equivalently stated
as $\text{$\BigO(1)$-ARAM}[+,/,\leftarrow,\textit{Bool}]\supseteq \Sigma^0_1$.

Note that an $\BigO(1)$-ARAM certainly terminates, and therefore its
underlying RAM certainly terminates as well. This means that its return
value can be inverted, making it accept a new language that is the complement
of the original accepted language. The set of complements to $\Sigma^0_1$ is
$\Pi^0_1$, from which we can conclude that
$\text{$\BigO(1)$-ARAM}[+,/,\leftarrow,\textit{Bool}]\supseteq \Sigma^0_1 \cup
\Pi^0_1$, this being the set of all formulae with a single quantifier.

We therefore first generate
$k-1$ ALNs, in a total of $\BigO(k)$ time. Given the values of the $k-1$ ALNs, the
formula to be calculated is in $\Sigma^0_1\cup\Pi^0_1$ (its one remaining
unbounded quantifier being $a_k$),
so we know it to be computable in
an additional $\BigO(1)$ time by an ARAM, for which we now use the $k$'th ALN.
In total, the ASRAM runs in $\BigO(k)$-time.

Any formula in AH is on some level of the hierarchy. To simulate it, we fix
$k$ to be that level. Thus, the simulating ASRAM runs in $\BigO(1)$-time.
\end{proof}

\begin{cor}
An ASRAM restricted to use only $k$ ALNs (regardless of its time complexity)
is equivalent to a formula on the $\Th(k)$-th level of the arithmetical
hierarchy.
\end{cor}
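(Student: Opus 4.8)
The plan is to extract the corollary directly from the quantifier-counting already done in the proof of Theorem~\ref{T:O1_ASRAM}, this time tracking the number of ALNs instead of the running time. Fix the operation set $\{+,/,\leftarrow,\textit{Bool}\}$ of this section and let $\mathcal{C}_k$ denote the class of languages recognised by an ASRAM that issues at most $k$ calls to $\textit{ALN}()$ on every run, with no bound whatsoever on its running time beyond the requirement that it halt. I would prove
\[
\Sigma^0_k\cup\Pi^0_k\ \subseteq\ \mathcal{C}_k\ \subseteq\ \Pi^0_{2k},
\]
and then note that, since $\Pi^0_{2k}\subseteq\Delta^0_{2k+1}$, the class $\mathcal{C}_k$ contains every language on level $k$ of AH and lies inside level $2k$; this is precisely the sense in which an ASRAM with $k$ ALNs ``sits on the $\Th(k)$-th level''.

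For the upper inclusion I would first normalise: any such ASRAM is equivalent to one that begins by generating all of $A_1,\ldots,A_k$ and then runs as a plain RAM on the tuple $(\inp,A_1,\ldots,A_k)$. If the original program makes its (at most $k$) calls to $\textit{ALN}()$ conditionally, one pre-generates $k$ values, dispenses them in order as the calls occur, and discards any that are never requested; since the restriction of an ALS set on $k$ coordinates to fewer coordinates is again an ALS set, this does not change the computed function. The post-generation computation is a RAM language acceptor, so the predicate $\phi(\inp,A_1,\ldots,A_k)$ it decides is r.e., i.e.\ in $\Sigma^0_1$ — and this holds irrespective of how long that RAM runs, which is exactly why no time bound is needed. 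Re-running the derivation of Theorem~\ref{T:O1_ASRAM} verbatim, the ASRAM accepts $\inp$ iff
\[
\forall N_1\,\exists A_1\,\forall N_2\,\exists A_2\cdots\forall N_k\,\exists A_k\ \Bigl[(A_1>N_1)\cap\cdots\cap(A_k>N_k)\cap\phi(\inp,A_1,\ldots,A_k)\Bigr],
\]
and counting blocks — the final $\exists A_k$ absorbs the leading existential of $\phi\in\Sigma^0_1$, leaving $k$ universal and $k$ existential blocks in strict alternation beginning with $\forall$ — the predicate is $\Pi^0_{2k}$.

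For the lower inclusion I would reuse the construction from the second half of the proof of Theorem~\ref{T:O1_ASRAM}: a formula with $k$ quantifiers is recognised by an ASRAM that spends $k-1$ ALNs bounding its first $k-1$ quantifiers and then runs the $\BigO(1)$-time ARAM of Theorem~\ref{T:O1_ARAM} on the remaining single-quantifier ($\Sigma^0_1$, or by inverting the acceptor, $\Pi^0_1$) formula, that ARAM consuming one further ALN — $k$ ALNs in all, the $\BigO(k)$ running time being immaterial here. As every language on level $k$ of AH is defined by such a formula, $\Sigma^0_k\cup\Pi^0_k\subseteq\mathcal{C}_k$.

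The one genuinely delicate step is the normalisation for the upper inclusion: one has to verify that ``pre-generate $k$ ALNs, hand them out to the conditional $\textit{ALN}()$ calls, ignore the surplus'' produces an ASRAM computing the same function — i.e.\ that a witnessing ALS set for the new program can be obtained from one for the old — which is exactly where the closure properties of ALS sets enter; the rest is Theorem~\ref{T:O1_ASRAM}'s bookkeeping rerun. A minor point to record alongside it is that a diverging underlying RAM renders $\phi$ false (matching both the ``non-termination means rejection'' convention for acceptors and $\phi\in\Sigma^0_1$), so the displayed $\Pi^0_{2k}$ formula is correct even for choices of the $A_i$ on which that RAM fails to halt. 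Finally, I would remark that the factor-$2$ gap between the two inclusions is intrinsic to the method — in the lower bound one ALN bounds a single quantifier (with one further unbounded quantifier supplied free by the terminal ARAM), whereas in the upper bound each ALN contributes a whole $\forall N_i\,\exists A_i$ pair — which is why the statement is phrased with $\Th(k)$ rather than with an exact level.
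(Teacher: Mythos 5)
Your proposal is correct and follows essentially the same route as the paper, which simply cites the two inclusions $\Sigma^0_k\cup\Pi^0_k\subseteq\mathcal{C}_k\subseteq\Pi^0_{2k}$ already established in the proof of Theorem~\ref{T:O1_ASRAM}. Your additional care about normalising conditional $\textit{ALN}()$ calls, about non-terminating runs, and about why the bound is stated as $\Th(k)$ merely makes explicit what the paper leaves implicit.
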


\begin{proof}
This result is a direct corollary of the proof for Theorem~\ref{T:O1_ASRAM},
which shows that an ASRAM utilising $k$ ALNs
can compute any formula in $\Sigma^0_k\cup\Pi^0_k$, and can be
computed by a formula in $\Pi^0_{2k}$.
\end{proof}

Consider, now, what happens when the ASRAM (not restricted to any fixed number
of ALNs), is allowed to run in $\om(1)$ time.

\begin{proof}[Proof of Theorem~\ref{T:om1_ASRAM}]
A well-known example of a function that is not in AH is ``TRUTH''. This is
a function that takes as input a formula, $\psi$, suitably encoded as an
integer, and determines whether this formula is true or not.

The inability to describe TRUTH as a formula in AH is known as Tarski's
undefinability
theorem \citep{Tarski:undefinability}. It is a corollary of G\"{o}del's
incompleteness theorem \citep{Goedel:incompleteness} and, in the formulation
given above, is a direct result of Post's theorem, stating that the
arithmetical hierarchy does not collapse \citep[see][]{Rogers:recursive}. 

Consider, first, an ASRAM working in $\Th(n)$ time, where $n=|\inp|$ is the
bit-length of its input, $\inp$. As demonstrated in
Theorem~\ref{T:O1_ASRAM}, such an ASRAM can compute, directly, any formula
with $\Th(n)$ quantifiers. Consider a formula, $\psi$, encoded in the
straightforward manner as an integer with $n$ bits. This formula will
necessarily have $\BigO(n)$ quantifiers, so a $\Th(n)$-time ASRAM (a linear-time
ASRAM) can be
used to compute its truth value. Hence, TRUTH is in
$\text{$\BigO(n)$-ASRAM}[+,/,\leftarrow,\textit{Bool\/}]$.

To extend this result from $\Th(n)$-time execution to $\om(1)$-time execution,
we note simply that the straightforward formula encoding used above may be,
perhaps, the most efficient encoding possible,
but is certainly not the only one. For example, it is possible to encode
the statement less efficiently by re-encoding the original input number,
$\inp$, as $\inp'=(2\inp+1)\times 2^T$. An arbitrary choice of $T$
allows constant-time decoding of the original $\inp$. However, because we
measure complexity as a function of the bit-length of the input,
$n'=|\inp'|$, and because the procedure shown here artificially increases this
bit-length by an arbitrarily-large value, $n-n'=T+1$, choosing a large enough
$T$ effectively reduces the run-time complexity arbitrarily. For example, if
$T$ is chosen to be $n^2$, we have $n'=\Th(n^2)$, so the ASRAM's run-time,
which is still $\Th(n)$, is merely $\Th(\sqrt{n'})$ when considered as a
function of the bit-length of its actual input, $\inp'$. With an appropriate
choice of $T$, the ASRAM's execution time, though still $\Th(n)$, can be taken
to be as low as any $\om(1)$ function of $n'$.

Tarski's undefinability theorem is independent of the exact choice of encoding
used to make the input formula, $\psi$, into a number. The new, tweaked TRUTH
function must, therefore, also lie outside of AH.

Thus, an $\om(1)$-time ASRAM can compute functions that are outside of AH.
\end{proof}

\section{Conclusions and future work}

We have fully characterised $\text{P-ASRAM}[+,\leftarrow,\textit{Bool\/}]$ and
$\text{$\BigO(1)$-ASRAM}[+,/,\leftarrow,\textit{Bool\/}]$. For
$\text{$\omega(1)$-ASRAM}[+,/,\leftarrow,\textit{Bool\/}]$, we have not
provided a full characterisation, other than stating that it is beyond AH, but
perhaps this is the best characterisation one can hope for: stratification
beyond AH is traditionally very coarse-grained. If anything, one can say that
ASRAM complexity provides us with a new and effective tool for fine-grained
stratification beyond AH.

Where future research appears most needed is regarding our result on the
arithmetic ASRAM. We have shown that
the ASRAM is a more powerful model than the ARAM under arithmetic
complexity, but full quantification of this extra power is still an interesting
open problem.

\end{document}